\author{\authorblockN{ Derrick Wing Kwan Ng and Robert Schober}
Institute for Digital Communications, Universit\"at Erlangen-N\"urnberg, Germany\\
Email: kwan@lnt.de,  schober@lnt.de

}
\title{Resource Allocation for Secure Communication in Systems with Wireless Information and Power Transfer }
\date{\thistime,\,\today}
\newtheorem{proposition}{Proposition}
\DeclareMathOperator{\Tr}{Tr}
\DeclareMathOperator{\Rank}{Rank}
\newtheorem{Remark}{Remark}
\newcommand{\abs}[1]{\lvert#1\rvert}
\newcommand{\norm}[1]{\lVert#1\rVert}
\begin{document}

\maketitle

\begin{abstract}
This paper considers secure communication in a multiuser multiple-input single-output (MISO) downlink system with
simultaneous wireless information and power transfer. We study the design of resource
allocation algorithms minimizing the total transmit power for the case when
the receivers  are able to harvest energy from the radio frequency. In particular,
the algorithm design is formulated as a non-convex
optimization problem which takes into account artificial
noise generation to combat potential eavesdroppers, a minimum required signal-to-interference-plus-noise ratio (SINR) at the desired receiver, maximum tolerable SINRs at the potential eavesdroppers,  and a minimum required power delivered to the receivers.
We adopt a semidefinite programming (SDP) relaxation  approach to obtain an upper bound solution for the considered problem. The tightness of the upper bound is revealed by examining a sufficient condition for the global optimal solution. Inspired by the sufficient condition,  we propose two suboptimal resource allocation schemes enhancing secure communication and facilitating efficient energy harvesting. Simulation results demonstrate a close-to-optimal performance achieved by the proposed suboptimal schemes and significant transmit power savings by  optimization of the artificial noise generation.

\end{abstract}

\renewcommand{\baselinestretch}{0.97}
\large\normalsize

\section{Introduction}
\label{sect1}
 Green radio communications has received much attention in both industry and academia under the pressure of environmental concerns and the rapidly increasing cost of
energy \cite{JR:Mag_green}-\nocite{CN:WIPT_fundamental,CN:Shannon_meets_tesla,JR:WIPT_fullpaper}\cite{CN:WIP_receiver}. In particular, different resource allocation algorithms and the use of multiple antennas  have been proposed  in the literature
 for energy saving in wireless communication systems.  Unfortunately, portable devices are often powered by batteries with limited operating cycles which remain the bottlenecks in
perpetuating the lifetime of networks. The introduction of energy harvesting capabilities for communication devices is considered as a vital solution in providing  self-sustainability  to power limited communication systems \cite{CN:WIPT_fundamental}-\cite{CN:WIP_receiver}.  In addition to harvesting energy from a variety of natural renewable energy sources such as wind and solar, exploiting ambient background electromagnetic radiation in radio frequency (RF) is also a viable source of energy for energy scavenging. More importantly, wireless energy harvesting technology facilitates the possibility of simultaneous wireless information and power transfer which introduces a paradigm shift in system and resource allocation algorithm design  \cite{CN:WIPT_fundamental}-\cite{CN:WIP_receiver}.   In practice, the transmitter can increase the energy of the
information carrying signal for facilitating energy harvesting at the receivers. However, this may also increases the
susceptibility to eavesdropping due a higher potential for information leakage.


On the other hand, recently,  a large amount of work has been
devoted to information-theoretic physical (PHY) layer
security \cite{Report:Wire_tap}-\nocite{JR:Artifical_Noise1}\cite{JR:Kwan_physical_layer}, as an alternative or complement to cryptographic encryption. In fact, PHY layer security exploits the physical characteristics  of the wireless  fading channel for providing perfect secrecy of communication. In  \cite{Report:Wire_tap}, Wyner showed
 that when the source-eavesdropper channel is a degraded version of the source-destination channel,   the source and the destination can exchange perfectly secure
 messages at a non-zero rate. As a result, secure communication systems employing multiple antennas have been proposed. By exploiting the extra degrees of freedom
offered by multiple antennas, artificial noise is
injected into the null space of the channels of the desired receiver  to degrade the
 channels of potential eavesdroppers.
  In \cite{JR:Artifical_Noise1} and
 \cite{JR:Kwan_physical_layer}, the authors
proposed different power allocation algorithms for maximizing the ergodic secrecy capacity and outage secrecy capacity via artificial noise generation, respectively. However, the receivers in  \cite{JR:Artifical_Noise1} and
 \cite{JR:Kwan_physical_layer} are assumed to be powered by perpetual energy sources which may not always be possible for power limited systems.  Furthermore, a significant portion of transmit power is allocated to artificial noise generation \cite{JR:Artifical_Noise1,JR:Kwan_physical_layer}  to guarantee secure communication. Indeed, the artificial noise can be used as an energy harvesting source for the receivers in extending the lifetime of the network. Yet, the proposed algorithms in \cite{JR:Artifical_Noise1,JR:Kwan_physical_layer}  do not utilize  the artificial noise for energy harvesting. Besides, the works in \cite{CN:WIPT_fundamental}-\cite{Report:Wire_tap}
focus on single antenna transmitters and their results may not be able to provide quality of services in secure communication systems with energy harvesting receivers.

Motivated by the aforementioned observations, we
formulate the resource allocation algorithm design for secure communication in multiuser multiple-input single-output (MISO) systems with concurrent wireless information and power transfer
 as an optimization problem. A semidefinite programming (SDP) based resource allocation algorithm is proposed to obtain an upper bound solution. Subsequently, the upper bound solution is used as a building block for the design of two suboptimal schemes which provide close-to-optimal performance.

 \begin{figure*}
 \centering
\includegraphics[width=5in]{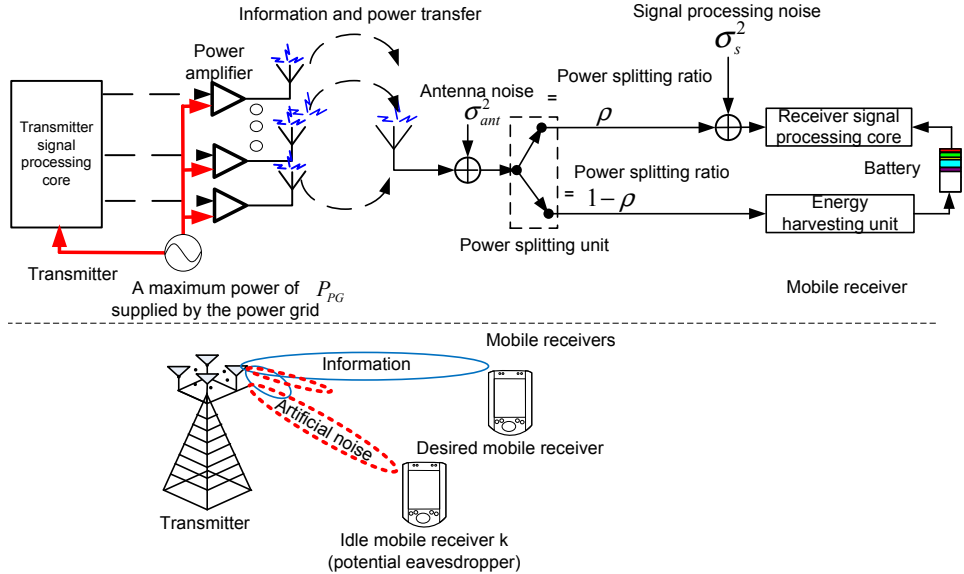}
 \caption{Multiuser system model for $K=2$ mobile receivers.  The upper half of the figure illustrates the block diagram of the transceiver
model for wireless information and power transfer.} \label{fig:system_model}
\end{figure*}

\section{System Model}
\label{sect:OFDMA_AF_network_model}
In this section, after introducing the notation used in this paper, we present the adopted multiuser downlink channel model.
\subsection{Notation}
 For square-matrix $\mathbf{X}$,
$\Tr(\mathbf{X})$ denotes the  trace of matrix
$\mathbf{X}$ and $\mathbf{X}\succeq 0$ indicates that
$\mathbf{X}$ is a positive semidefinite matrix. $(\mathbf{X})^H$
and $\Rank(\mathbf{X})$ denote the conjugate transpose and the
rank of matrix $\mathbf{X}$, respectively. Matrix $\mathbf{I}_{N}$
denotes an $N\times N$ identity matrix.  $\mathbb{C}^{N\times M}$ denotes the space of $N\times M$ matrices with complex entries.
$\mathbb{H}^N$ represents the set of all $N$-by-$N$ complex Hermitian matrices.
The distribution of a circularly symmetric complex Gaussian (CSCG)
vector with mean vector $\mathbf{x}$ and covariance matrix
$\mathbf{\Sigma}$  is denoted by ${\cal
CN}(\mathbf{x},\mathbf{\Sigma})$, and $\sim$ means ``distributed
as".   $[x]^+=\max\{0,x\}$ and $\cal E\{\cdot\}$ denotes the statistical expectation.

\subsection{ Channel Model}
We consider the downlink of a communication system which consists of a transmitter and $K$ legitimate receivers. The transmitter is equipped with $N_t$ transmit antennas while the receivers are single antenna devices and are able to decode information and harvest energy from radio signals, cf.  Figure \ref{fig:system_model}. In each scheduling slot, the transmitter conveys information to a given receiver and transfers energy\footnote{In this paper,  a normalized energy unit, i.e., Joule-per-second, is adopted.  Therefore,
the terms "power" and "energy" are used interchangeably in this paper.} to all receivers. The $K-1$ idle receivers are legitimate receivers and supposed to harvest power from RF when they are inactive. However, it is possible that the idle receivers are malicious and eavesdrop the information signal of other legitimate receivers. As a result, they are potential eavesdroppers, which should be taken into account for providing secure communication.  The total
 bandwidth of the system is $\cal B$ Hertz.
 We assume a frequency flat slow fading channel and we focus on a time division duplexing (TDD) system. The
downlink channel gains of all receivers can be accurately obtained based on the uplink pilots in the handshaking signal via channel reciprocity. The downlink
received signal at the desired receiver and the $K-1$ idle receivers are given by, respectively,
\begin{eqnarray}
y&=&\mathbf{h}^{H} \mathbf{x}+ z_s +z_a,\\
y_{I,k}&=&\mathbf{g}_{k}^{H} \mathbf{x}+ z_s +z_a,\,\,  \forall k=\{1,\ldots,K-1\},
\end{eqnarray}
where $\mathbf{x}\in\mathbb{C}^{ N_T \times 1}$ denotes the transmitted symbol vector.
$\mathbf{h}^{H}\in\mathbb{C}^{1\times N_T}$ is the channel
vector between the transmitter and the desired receiver
  and
$\mathbf{g}_{k}^{H} \in\mathbb{C}^{1\times N_T}$ is the channel
vector between the transmitter and idle receiver (potential eavesdropper) $k$. We note that both
variables, $\mathbf{h}$ and $\mathbf{g}_{k}$,
  include the effects of the multipath fading and path loss
of the associated channels.
 $z_s$ and  $z_a$ are  additive white Gaussian noises (AWGN) resulting from signal processing and
  the receive antenna with zero means and variances $\sigma_{s}^2$ and $\sigma_{ant}^2$, respectively\footnote{We assume that the signal processing and thermal noise characteristics are the same for all receivers due to a similar hardware architecture.},  cf. Figure \ref{fig:system_model}.

  \subsection{Hybrid Information and Energy Harvesting Receiver}
\label{sect:receiver}
In this paper, we adopt hybrid receivers \cite{JR:WIPT_fullpaper,CN:WIP_receiver}
which can split the received signal into two power streams with power splitting ratio $1-\rho$ and $\rho$, cf. Figure \ref{fig:system_model}, for harvesting energy and decoding the modulated  information in the signal, respectively. The power splitting unit is assumed to be a perfect passive device; it does not introduce any extra power gain, i.e., $0\le \rho\le 1$, or noise in power splitting.  Besides, we assume that there is a battery for storing the harvested energy for future use. In practice, if the amount of harvested energy is lager than what can currently be stored, the excess harvested energy will be discarded. We assume that the harvested energy  is used by receivers as a supplementary energy source for supporting their normal operation.

\subsection{Artificial Noise Generation}
In order to provide
secure communication to the desired receiver, artificial noise
signals are generated at the transmitter to degrade the channels between
the transmitter and the idle receivers (potential eavesdroppers). The transmitter chooses the transmit signal vector $\mathbf{x}$ as
\begin{eqnarray}
\mathbf{x}=\underbrace{\mathbf{w}s}_{\mbox{Desired signal}}+\underbrace{\mathbf{v}}_{\mbox{Artificial noise}},
\end{eqnarray}
where $s\in\mathbb{C}^{1\times 1}$ and $\mathbf{w}\in\mathbb{C}^{N_t\times 1}$  are the information bearing signal and the corresponding  beamforming vector dedicated to the desired receiver, respectively. We assume without loss of generally that  ${\cal E}\{\abs{s}^2\}=1$. $\mathbf{v}\in\mathbb{C}^{N_t\times 1}$ is the artificial noise vector generated by the transmitter to combat the potential eavesdroppers. $\mathbf{v}$ is modeled as a complex Gaussian random vector with
\begin{eqnarray}
\mathbf{v}\sim {\cal CN}(\mathbf{0}, \mathbf{V}),
\end{eqnarray}
where $\mathbf{V}\in \mathbb{H}^{N_t}, \mathbf{V}\succeq \mathbf{0}$, denotes the covariance matrix of the artificial noise.

We note that unlike in other system models used in the literature, e.g. \cite{JR:Artifical_Noise1,JR:Kwan_physical_layer}, the artificial noise signals in the considered system can act as a vital energy source which supplies energy to the receivers. The transmitter can use the energy of the information signal solely as a energy supply for the receivers \cite{JR:WIPT_fullpaper,CN:WIP_receiver}. However, increasing the transmit power of the information signal for facilitating energy harvesting may also increases the susceptibility to  eavesdropping. As a result, in this paper, we advocate the dual use of artificial noise in providing simultaneous security and efficient energy harvesting.

\section{Resource Allocation Algorithm Design}\label{sect:forumlation}
\subsection{System Capacity and Secrecy Capacity}
\label{subsect:Instaneous_Mutual_information}
 Given perfect channel state information (CSI) at the
receiver, the system capacity (bit/s/Hz) between the transmitter and the desired receiver
is given by
\begin{eqnarray}\label{eqn:cap}
C&=&\log_2\Big(1+\Gamma\Big)\,\,\,\,
\mbox{and}\,\,\\
\Gamma&=&\frac{\rho\abs{\mathbf{h}^H\mathbf{w}}^2}
{\rho(\sigma_{ant}^2+\Tr(\mathbf{h}\mathbf{h}^H\mathbf{V}))+\sigma_s^2} ,
\end{eqnarray}
where $\Gamma$ is the received signal-to-interference-plus-noise ratio (SINR) at the desired receiver.

On the other hand, the channel capacity between the transmitter and idle receiver (potential eavesdropper) $k$  is given  by
\begin{eqnarray}\label{eqn:cap-eavesdropper}
C_{I,k}&=&\log_2\Big(1+\Gamma_{k}\Big)\,\,\,\,
\mbox{and}\,\,\\
\label{eqn:eavesdropper-SINR}
\Gamma_{I,k}&=&\frac{\rho_k\abs{\mathbf{g}_k^H\mathbf{w}}^2}{\rho_k(\sigma_{ant}^2+\Tr(\mathbf{g}_k\mathbf{g}_k^H\mathbf{V}))+\sigma_s^2}  \\ \label{eqn:eavesdropper-SINR-bound}
&\stackrel{(a)}{\le}& \frac{\abs{\mathbf{g}_k^H\mathbf{w}}^2}{\sigma_{ant}^2+\Tr(\mathbf{g}_k\mathbf{g}_k^H\mathbf{V})+\sigma_s^2}   \end{eqnarray}
where $\rho_k$ and  $\Gamma_{I,k}$ are the power splitting ratio and the received SINR at idle receiver $k$, respectively. (a) is due to the fact that $\Gamma_{I,k}$ is a monotonically increasing function of $\rho_k$. The physical meaning of (\ref{eqn:eavesdropper-SINR-bound}) is that idle receiver $k$ gives up the opportunity to harvest energy  and devotes all the received power to eavesdropping.  Therefore, the maximum achievable secrecy capacity between the transmitter
and the desired receiver can be expressed as \cite{JR:Artifical_Noise1}
\begin{eqnarray}\label{eqn:secrecy_cap}
C_{sec}=\Big[C - \underset{k\in\{1,\ldots,K-1\}}{\max} C_{I,k}^{UP}\Big]^+,
\end{eqnarray}
where $C_{I,k}^{UP}$ is obtained by replacing SINR $\Gamma_{I,k}$ in (\ref{eqn:eavesdropper-SINR}) with its upper bound in (\ref{eqn:eavesdropper-SINR-bound}).

\subsection{Optimization Problem Formulation}
\label{sect:cross-Layer_formulation}
The optimal resource allocation policy, ${\mathbf w}^*$, ${\rho}^*$ ,${\mathbf  V}^*$, for minimizing the total radiated power,  can be
obtained by solving
\begin{eqnarray}
\label{eqn:cross-layer}&&\hspace*{5mm} \min_{\mathbf{V}\in \mathbb{H}^{N_t},\mathbf{w}, \rho
}\,\, \norm{\mathbf{w}}^2+\Tr(\mathbf{V})\nonumber\\
\notag \mbox{s.t.} &&\hspace*{-5mm}\mbox{C1: }\notag\frac{\rho\abs{\mathbf{h}^H\mathbf{w}}^2}{\rho(\sigma_{ant}^2+\Tr(\mathbf{h}\mathbf{h}^H
\mathbf{V}))+\sigma_s^2} \ge \Gamma_{req}, \\
&&\hspace*{-5mm}\mbox{C2: }\notag\frac{\abs{\mathbf{g}_k^H\mathbf{w}}^2}
{\sigma_{ant}^2+\Tr(\mathbf{g}_k\mathbf{g}_k^H\mathbf{V})+\sigma_s^2} \le \Gamma_{tol_k},\forall k,\\
&&\hspace*{-5mm}\mbox{C3: }\notag(1-\rho)\eta\abs{\mathbf{h}^H\mathbf{w}}^2+(1-\rho)\eta\Big(\Tr(\mathbf{h}\mathbf{h}^H\mathbf{V})+\sigma_{ant}^2\Big)\\
&&\ge P_{\min}, \notag\\
&&\hspace*{-5mm}\mbox{C4: }\notag \eta\abs{\mathbf{g}_k^H\mathbf{w}}^2+\eta\Big(\Tr(\mathbf{g}_k\mathbf{g}_k^H\mathbf{V})+\sigma_{ant}^2\Big)\ge P_{\min_k},\forall k, \\
&&\hspace*{-5mm}\mbox{C5: }\notag \norm{\mathbf{w}}^2  +\Tr(\mathbf{V})\le P_{\max}, \\
&&\hspace*{-5mm}\mbox{C6: }\notag \norm{\mathbf{w}}^2\varepsilon  +\Tr(\mathbf{V})\varepsilon+P_C\le P_{PG}, \\
&&\hspace*{-5mm}\mbox{C7:}\,\, 0\le\rho\le 1,\quad\mbox{C8:}\,\, \mathbf{V}\succeq 0.
\end{eqnarray}
Variable $\Gamma_{req}$ in C1 specifies the minimum requirement on the SINR
 of the desired receiver for information decoding. $\Gamma_{tol_k}$ in C2 denotes the maximum tolerable SINR at idle receiver (potential eavesdropper) $k$. In practice, the transmitter sets $\Gamma_{req}\gg \Gamma_{{tol_k}},\forall k\in\{1,\ldots,K-1\}$, to ensure secure communication. Specifically, if the above optimization problem is feasible, it is guaranteed that the secrecy capacity $C_{sec}\ge \log_2(1+\Gamma_{req})-\log_2(1+\underset{k}{\max}\{\Gamma_{{tol_k}}\})\ge 0$. We note that although $\Gamma_{req}$ and $\Gamma_{{tol_k}}$ in C1 and C2, respectively,  are
not optimization variables in this paper, a balance between
secrecy capacity and system capacity can be struck
by varying their values. $P_{\min}$ and $P_{\min_k}$ in C3 and C4 set the minimum required power transfer to the desired information  receiver and potential eavesdroppers, respectively.
We note that the transmitter can only guarantee the minimum required power transfer to the idle receivers if they employ all their received power for energy harvesting, i.e., if they do not intend to eavesdrop.
$\eta$ denotes the energy harvesting efficiency of the receivers in converting
the received radio signal to electrical energy for storage. $P_{\max}$ in C5 restricts the maximum transmit spectrum mask for reducing the amount of out-of-cell
interference and the value is specified by regulation.
Constants
$P_C$ and $\varepsilon$ in C6 account for the circuit
power consumption at the transmitter and the inefficiency of the
power amplifier, respectively. C6 is imposed to guarantee that the total power consumption of the transmitter for both transmission and circuitries is less than the maximum  power supplied by the power grid $P_{PG}$, cf. Figure \ref{fig:system_model}. C7 is the boundary constraint for power splitting variable $\rho$. C8 and $\mathbf{V}\in \mathbb{H}^{N_t}$ constrain matrix $\mathbf{V}$ to be a  positive semidefinite Hermitian matrix to satisfy the physical requirements on covariance matrices.

\section{Solution of the Optimization Problem} \label{sect:solution}
The optimization problem in (\ref{eqn:cross-layer}) can be classified as a non-convex quadratically constrained quadratic program (QCQP). The non-convexity is due to
constraints C1 and C3 on the information bearing beamforming vector $\mathbf{w}$ and the power splitting ratio $\rho$.
In general,
there is no standard approach for solving non-convex optimization problems. In some extreme cases, a brute force approach is required to obtain a global optimal solution which is computationally intractable for a moderate  system size.
In order to derive an efficient
resource allocation algorithm for the considered problem, we recast the problem as a
convex optimization problem by semidefinite programming (SDP) relaxation. In the sequel, we assume that the problem is always feasible for studying the design of different resource allocation schemes.
\subsection{Semidefinite Programming Relaxation} \label{sect:solution_dual_decomposition}
For facilitating the SDP relaxation, we define $\mathbf{W}=\mathbf{w}\mathbf{w}^H$ and rewrite problem (\ref{eqn:cross-layer}) in terms of $\mathbf{W}$ as
\begin{eqnarray}
\label{eqn:SDP}&&\hspace*{5mm} \min_{\mathbf{W,V}\in \mathbb{H}^{N_t}, \rho
}\,\, \Tr(\mathbf{W})+\Tr(\mathbf{V})\nonumber\\
\notag \mbox{s.t.} &&\hspace*{-5mm}\mbox{C1: }\notag\frac{\rho\Tr(\mathbf{h}\mathbf{h}^H\mathbf{W})}
{\rho(\sigma_{ant}^2+\Tr(\mathbf{h}\mathbf{h}^H\mathbf{V}))+\sigma_s^2} \ge \Gamma_{req}, \\
&&\hspace*{-5mm}\mbox{C2: }\notag\frac{\Tr(\mathbf{g}_k\mathbf{g}_k^H\mathbf{W})}{\sigma_{ant}^2+\Tr(\mathbf{g}_k\mathbf{g}_k^H\mathbf{V})+\sigma_s^2} \le \Gamma_{{tol_k}},\forall k, \\
&&\hspace*{-5mm}\mbox{C3: }\notag \Tr(\mathbf{hh}^H\mathbf{W})+\Tr(\mathbf{h}\mathbf{h}^H\mathbf{V})+\sigma_{ant}^2\ge\frac{ P_{\min}}{(1-\rho)\eta}, \\
&&\hspace*{-5mm}\mbox{C4: }\notag\Tr(\mathbf{g}_k\mathbf{g}_k^H\mathbf{W})+\Tr(\mathbf{g}_k\mathbf{g}_k^H\mathbf{V})+\sigma_{ant}^2\ge \frac{P_{\min_k}}{\eta},\forall k, \\
&&\hspace*{-5mm}\mbox{C5: }\notag \Tr(\mathbf{W})  +\Tr(\mathbf{V})\le P_{\max}, \\
&&\hspace*{-5mm}\mbox{C6: }\notag \Tr(\mathbf{W})\varepsilon  +\Tr(\mathbf{V})\varepsilon+P_C\le P_{PG}, \\
&&\hspace*{-5mm}\mbox{C7:}\,\, 0\le\rho\le 1, \,\,\,\mbox{C8:}\,\, \mathbf{W}\succeq 0, \mathbf{V}\succeq 0,\notag \\
&&\hspace*{-5mm}\mbox{C9:}\,\, \Rank(\mathbf{W})=1,
\end{eqnarray}
where $\mathbf{W}\succeq 0$, $\mathbf{W}\in \mathbb{H}^{N_t}$, and $\Rank(\mathbf{W})=1$ in (\ref{eqn:SDP}) are imposed to guarantee that $\mathbf{W}=\mathbf{w}\mathbf{w}^H$. By relaxing constraint $\mbox{C9: }\Rank(\mathbf{W})=1$, i.e., removing it from the problem formulation, the considered problem becomes a convex SDP which can be solved efficiently by numerical solvers such as SeDuMi \cite{JR:SeDumi}. From the basic principles of optimization theory, if the obtained solution $\mathbf{W}$ for the relaxed problem is a rank one matrix, then it is the optimal solution of the original problem in (\ref{eqn:SDP}). However, it is known that the relaxation may not be tight and in that case the result of the relaxed problem serves as a performance upper bound for the original problem. In the following, we will reveal a sufficient condition for $\Rank(\mathbf{W})=1$ of the relaxed problem and exploit it as a
building block for the design of two suboptimal resource allocation schemes.

 \subsection{Optimality Conditions for SDP Relaxation }
In this subsection, we reveal the tightness of the proposed  SDP relaxation via examination of the dual problem and the Karush-Kuhn-Tucker (KKT) conditions of the relaxed version of problem (\ref{eqn:SDP}). For this purpose, we first need
the Lagrangian function  of  (\ref{eqn:SDP}) which is given by
\begin{eqnarray}\hspace*{-2mm}&&\notag{\cal
L}(\mathbf{W},\mathbf{V},\rho,\lambda,\boldsymbol{\beta},\mu,\boldsymbol{\delta},\theta,\psi,\mathbf{Y},\mathbf{Z})\\
\notag\hspace*{-5mm}&=&\hspace*{-3mm} \Tr(\mathbf{W})+\Tr(\mathbf{V})-\mathbf{YW}-\mathbf{ZV} \\
\notag\hspace*{-5mm}&+&\hspace*{-3mm} \sum_{k=1}^{K-1}\beta_k \Big[\hspace*{-0.5mm}
\Tr(\mathbf{g}_k\mathbf{g}_k^H\mathbf{W})-
\Gamma_{tol_{k}}\hspace*{-0.5mm}\Big(
\Tr(\mathbf{g}_k\mathbf{g}_k^H\mathbf{V}) \hspace*{-0.5mm}+\hspace*{-0.5mm}\sigma_{ant}^2\hspace*{-0.5mm}+\hspace*{-0.5mm}
\sigma_s^2\Big)\hspace*{-0.5mm}\Big]\\
\notag\hspace*{-5mm}&+&\hspace*{-3mm}  \mu\Big(\frac{ P_{\min}}{\eta(1-\rho)}\Tr(\mathbf{hh}^H\mathbf{W})-\Tr(\mathbf{hh}^H\mathbf{V})-\sigma_{ant}^2\Big)\end{eqnarray}\begin{eqnarray}
\notag\hspace*{-5mm}&+&\hspace*{-3mm} \sum_{k=1}^{K-1} \delta_k \Big( \frac{P_{\min_k}}{\eta}-\Tr(\mathbf{g}_k\mathbf{g}_k^H\mathbf{W})-\Tr(\mathbf{g}_k\mathbf{g}_k^H\mathbf{V})-
\sigma_{ant}^2\Big)\\
\notag\hspace*{-5mm}&+& \hspace*{-3mm}\notag \theta\Big( \Tr(\mathbf{W})\varepsilon  +\Tr(\mathbf{V})\varepsilon+P_C- P_{PG}\Big)  \\
\notag\hspace*{-5mm}&+& \hspace*{-3mm}\notag \psi\Big( \Tr(\mathbf{W}) +\Tr(\mathbf{V})- P_{\max}\Big)\\
\hspace*{-5mm}&+&\hspace*{-3mm} \lambda\Big(\frac{\Gamma_{req}\sigma_s^2}{\rho}-\Tr(\mathbf{h}\mathbf{h}^H\mathbf{W})+\Gamma_{req}
\Tr(\mathbf{h}\mathbf{h}^H\mathbf{V})+\sigma_{ant}^2\hspace*{-0.5mm}\Big).
\label{eqn:Lagrangian}
\end{eqnarray}
Here, $\lambda\ge 0$ is the Lagrange multiplier for the minimum required SINR of the desired receiver in C1. $\boldsymbol \beta$ is
the vector of Lagrange multipliers for the maximum tolerable SINRs of the potential eavesdroppers in C2 with elements ${\beta_k}\ge 0$,
$k\in\{1,\,\ldots,\,K-1\}$.  Lagrange multiplier  $\mu\ge 0$  corresponds to the minimum required power transfer to the desired receiver in C3. $\boldsymbol \delta$,  with elements $\delta_{k}\ge 0$,  is the Lagrange multiplier vector
associated with the minimum required power transfer to the potential eavesdroppers in C4.
 $\psi,\theta\ge0$ are the Lagrange multipliers for the maximum radiated power and the total power usage from
the power grid in C5 and C6, respectively. Matrices $\mathbf{Y,Z}\succeq 0$ are the Lagrange multipliers for the semidefinite constraints on matrices $\mathbf{W}$ and $\mathbf{V}$ in C8, respectively.  On the other hand, boundary constraint C7 for $\rho$
is  satisfied automatically as will be illustrated
when we study  the sufficient condition for the optimal resource allocation solution in the
Appendix. Thus, the dual problem for the SDP relaxed problem is given by
\begin{eqnarray}\label{eqn:dual}
\underset{ \underset{\mathbf{Y},\mathbf{Z}\succeq 0}{\lambda,\boldsymbol{\beta},\mu,\boldsymbol{\delta},\theta,\psi\ge0}}{\max}\ \underset{{\mathbf{W,V}\in \mathbb{H}^{N_t},\rho}}{\min}{\cal
L}(\mathbf{W},\mathbf{V},\rho,\lambda,\boldsymbol{\beta},\mu,\boldsymbol{\delta},\theta,\psi,\mathbf{Y},\mathbf{Z}).\label{eqn:master_problem}
\end{eqnarray}

Now, we are ready to reveal a sufficient condition for a rank one matrix solution for the relaxed version of problem (\ref{eqn:SDP}) in the following proposition.
\begin{proposition} Consider the relaxed version  of problem (\ref{eqn:SDP}) for $\Gamma_{req}>0$ and suppose that the problem is feasible. Then, $\Rank(\mathbf{W})=1$ when  $\beta_k\ge\delta_k\ge 0 $, $\forall k$.
\end{proposition}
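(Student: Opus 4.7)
The approach is to apply strong duality to the convex SDP relaxation (feasibility of which is assumed, so Slater's condition is generically satisfied) and extract two of its KKT conditions: the stationarity condition in $\mathbf{W}$, which pins down the dual matrix $\mathbf{Y}$ in closed form, and the complementary slackness $\mathbf{Y}\mathbf{W}=\mathbf{0}$, which translates a lower bound on $\Rank(\mathbf{Y})$ into an upper bound on $\Rank(\mathbf{W})$. The first concrete step is to differentiate the Lagrangian in (\ref{eqn:Lagrangian}) with respect to $\mathbf{W}$ and set the gradient to zero. Collecting the coefficients of the identity, of each rank-one matrix $\mathbf{g}_k\mathbf{g}_k^H$, and of $\mathbf{h}\mathbf{h}^H$ gives
\begin{equation*}
\mathbf{Y} \;=\; \underbrace{(1+\theta\varepsilon+\psi)\mathbf{I}_{N_t} + \sum_{k=1}^{K-1}(\beta_k-\delta_k)\,\mathbf{g}_k\mathbf{g}_k^H}_{\mathbf{A}} \;-\; (\lambda+\mu)\,\mathbf{h}\mathbf{h}^H .
\end{equation*}

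Under the hypothesis $\beta_k\ge\delta_k\ge 0$, every term in the sum defining $\mathbf{A}$ is positive semidefinite, and because $\theta,\psi\ge 0$ the identity coefficient is at least $1$; hence $\mathbf{A}\succ\mathbf{0}$ with $\Rank(\mathbf{A})=N_t$. The standard rank inequality $\Rank(\mathbf{A}-\mathbf{B})\ge\Rank(\mathbf{A})-\Rank(\mathbf{B})$ applied with $\mathbf{B}=(\lambda+\mu)\mathbf{h}\mathbf{h}^H$ then yields $\Rank(\mathbf{Y})\ge N_t-1$. Combining this with $\mathbf{Y}\mathbf{W}=\mathbf{0}$ and $\mathbf{W}\succeq\mathbf{0}$ forces the columns of $\mathbf{W}$ into the (at most one-dimensional) null space of $\mathbf{Y}$, so $\Rank(\mathbf{W})\le 1$. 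The only remaining possibility to rule out is $\mathbf{W}=\mathbf{0}$, which would give $\Tr(\mathbf{h}\mathbf{h}^H\mathbf{W})=0$; together with $\Gamma_{req}>0$ and $\sigma_s^2>0$, this contradicts C1, violating the assumed feasibility. Therefore $\Rank(\mathbf{W})=1$.

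The main obstacle I anticipate is the bookkeeping in the stationarity step: one has to verify that the multipliers $\lambda$ (from C1) and $\mu$ (from C3) both contribute a term proportional to $-\mathbf{h}\mathbf{h}^H$, so that they can be merged into the single rank-one subtraction above, and that the eavesdropper terms combine exactly as $\beta_k-\delta_k$ (i.e.\ with $\beta_k$ from C2 and $-\delta_k$ from C4). It is precisely these two sign patterns that make the assumption $\beta_k\ge\delta_k$ enough to render $\mathbf{A}$ positive definite, after which the rank conclusion follows almost mechanically; stationarity with respect to $\mathbf{V}$ and $\rho$, as well as the boundary constraint C7, are not needed for the rank argument itself and can be handled separately.
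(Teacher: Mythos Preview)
Your proof is correct and follows the same KKT-based skeleton as the paper: write the stationarity condition for $\mathbf{W}$ as $\mathbf{Y}=\mathbf{A}-(\lambda+\mu)\mathbf{h}\mathbf{h}^H$, argue that $\mathbf{A}$ has full rank, use the subadditivity of rank to get $\Rank(\mathbf{Y})\ge N_t-1$, and conclude via complementary slackness and the infeasibility of $\mathbf{W}=\mathbf{0}$ under C1 with $\Gamma_{req}>0$.

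Where you differ from the paper is in the step establishing that $\mathbf{A}$ is full rank. The paper proves this by contradiction: it posits a null vector $\mathbf{u}$ of $\mathbf{A}$, derives the KKT condition for $\rho$ to force $\lambda^*>0$, and then invokes statistical independence of $\mathbf{h}$ and the $\mathbf{g}_k$'s to argue $\mathbf{u}^H\mathbf{h}\ne 0$, producing a sign contradiction with $\mathbf{Y}\succeq\mathbf{0}$. You instead observe directly that under the hypothesis $\beta_k\ge\delta_k$ the matrix $\mathbf{A}$ is a sum of $(1+\theta\varepsilon+\psi)\mathbf{I}_{N_t}\succ\mathbf{0}$ and a positive semidefinite correction, hence $\mathbf{A}\succ\mathbf{0}$ immediately. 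Your route is strictly more elementary: it avoids the stationarity condition in $\rho$ and any probabilistic genericity assumption on the channels, and it also correctly picks up the $\theta\varepsilon$ contribution from C6 that the paper's appendix drops. The paper's longer argument does not buy additional generality here, since the proposition already assumes $\beta_k\ge\delta_k$; its machinery would only be needed if one tried to relax that hypothesis.
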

\begin{proof}Please refer to Appendix.
\end{proof}
Intuitively, when the requirement in constraint C4 becomes less stringent, i.e., $P_{\min_k}\rightarrow 0 \Rightarrow \delta_k\rightarrow 0$, the  SDP relaxation algorithm has a higher chance to obtain a rank one matrix solution and thus achieves the global optimal.

\begin{Remark}
We would like to emphasize that although Proposition 1 provides a sufficient condition for a rank one matrix solution under SDP relaxation, we found by simulation that there are instances in which SDP relaxation results in a rank one matrix even though the sufficient condition does not hold.
\end{Remark}

In the following, we propose two suboptimal
resource allocation schemes which are inspired by the
SDP relaxation based resource allocation solution.

\subsubsection{Suboptimal Resource Allocation Scheme 1}
It can be observed that the solution of SDP relaxation has rank one $\mathbf{W}$ when constraint C4 is not active, i.e., $\delta_k=0,\forall k$, or it is independent of optimization variable $\mathbf{W}$. For facilitating an efficient resource allocation scheme design, we replace constraint  C4 in (\ref{eqn:SDP}) by C10 and the new optimization is given as follows:
\begin{eqnarray}\label{eqn:suboptimal1}
&&\hspace*{5mm} \min_{\mathbf{W,V}\in \mathbb{H}^{N_t}, \rho
}\,\, \Tr(\mathbf{W})+\Tr(\mathbf{V})\\
\notag \mbox{s.t.} &&\hspace*{10mm}\mbox{C1, C2, C3, C5, C6, C7, C8, }\\
&&\hspace*{-5mm}\mbox{C10: } \Tr(\mathbf{g}_k\mathbf{g}_k^H\mathbf{V})+\sigma_{ant}^2\ge \frac{P_{\min_k}}{\eta}, \forall k\in\{1,\ldots,K-1\}\notag .
\end{eqnarray}

Compared to constraint C4, the new constraint C10 does not take into account the contribution of the information signal to the harvested power at the potential eavesdroppers, as the term $\Tr(\mathbf{g}_k\mathbf{g}_k^H\mathbf{W})$ is neglected.  Since replacing constraint C4 by C10 results in a smaller feasible solution set for the original problem, the obtained solution of problem (\ref{eqn:suboptimal1}) serves as a performance lower bound for the original optimization problem (\ref{eqn:cross-layer}). We note that the new constraint does not destroy the convexity of the relaxed problem and the relaxed problem can be solved efficiently via SDP relaxation and the numerical methods  suggested in Section \ref{sect:solution_dual_decomposition} \cite{book:convex}. In fact, it can be shown that the obtained solution\footnote{We can follow a similar approach as in the Appendix to examine the KKT conditions for the new problem formulation. In particular,  the sufficient conditions for a rank one matrix solution stated in Proposition 1 are always satisfied for the new problem formulation since  constraint C10 is independent of $\mathbf{W}$.} of problem (\ref{eqn:suboptimal1}) has always rank one, i.e., $\Rank(\mathbf{W})=1$, even though SDP relaxation is applied.

\subsubsection{Suboptimal Resource Allocation Scheme 2}
The proposed suboptimal scheme 2 is a hybrid resource allocation scheme which is summarized in Table \ref{table:algorithm}. Specifically, it computes the solutions for  the SDP relaxation in (\ref{eqn:SDP}) and suboptimal scheme 1 in parallel and selects one of the solutions. In particular, when the solution for the SDP relaxation is not rank one, i.e., $\Rank(\mathbf{W})>1$; the upper bound solution of SDP relaxation is not tight, and thus,  the proposed scheme 2 will adopt the solution given by the proposed suboptimal scheme 1. Otherwise,  the proposed scheme 2 will select the solution given by the SDP relaxation since the global optimal is achieved if $\Rank(\mathbf{W})=1$. We note that although the proposed scheme 2 requires solving two optimization problems, it is still a scheme with polynomial time complexity due to the convexity of both problems.
\begin{table}[t]\caption{Suboptimal Resource Allocation Scheme.}\label{table:algorithm}
\vspace*{-5mm}\small
\begin{algorithm} [H]           \setcounter{algorithm}{1}          
\floatname{algorithm}{Suboptimal Resource Allocation Scheme}
\caption{}          
\label{alg1}                           
\begin{algorithmic} [1]
\normalsize           

\STATE Solve the relaxed version of  problem (\ref{eqn:SDP}) and problem (\ref{eqn:suboptimal1}) in parallel
\IF {the solution of the relaxed version of problem (\ref{eqn:SDP}) is rank one, i.e., $\Rank(\mathbf{W})=1$, } \STATE  $\mbox{Global optimal soultion}=\,$\TRUE \RETURN
 ${\mathbf W}^*$, ${\rho}^*$ ,${\mathbf  V}^*=$ solution of the relaxed version of problem (\ref{eqn:SDP})
 \ELSE
\STATE  $\mbox{Lower bound soultion}=\,$ \TRUE\RETURN
 ${\mathbf W}$, ${\rho}$ ,${\mathbf  V}=$ solution of problem (\ref{eqn:suboptimal1})
 \ENDIF

\end{algorithmic}
\end{algorithm}\vspace*{-5mm}
\end{table}

\section{Results}
\label{sect:result-discussion} In this section, we evaluate the
system performance for the proposed resource allocation schemes using simulations.  The TGn path loss model \cite{report:tgn} for indoor communications is adopted with directional transmit and receive antenna gains of  10 dB. The reference distance of the path loss model is 2 meters  and there are $K$ receivers uniformly distributed between
the reference distance and the maximum service distance of 10 meters.  The system bandwidth is ${\cal B}=200$ kHz. We assume  a carrier
center frequency of $470$ MHz which will be used by the IEEE 802.11af next generation Wi-Fi systems  \cite{report:80211af}. The transmitter is equipped with $N_t=6$ antennas. The small scale fading coefficients
are generated as independent and identically distributed Rician random
variables with Rician factor $6$ dB.  We assume that the signal processing noise in every
receiver is due to thermal noise and quantization noise. Specifically, a 8-bit uniform quantizer is used
for quantizing the received information. As
a result, the quantization noise power and the thermal noise power are $-23$ dBm
and $-111$ dBm, respectively. In addition, the antenna noise is set to $\sigma_{ant}^2= -114$ dBm. Unless specified otherwise, we assume a circuit power
consumption at the transmitter of $P_C=$ 30 dBm, a maximum power supply of $P_{PG}=$ 40 dBm from the power grid, a minimum required power transfer of $P_{\min}=P_{\min_k}=0$ dBm, $\forall k$, and an energy harvesting efficiency of $\eta=0.5$.  On the other hand, we assume the maximum  SINR tolerance of each idle receiver (potential eavesdropper) is $\Gamma_{tol_k}=-10$ dB, $\forall k$, and a power amplifier with power efficiency of
$38\%$ is used at the transmitter,
i.e., $\varepsilon=\frac{1}{0.38}$. The average total transmit power of the transmitter is obtained by averaging over both path loss and multipath  fading.

\begin{figure}[t]
 \centering\vspace*{-0.5cm}
\includegraphics[width=3.5 in]{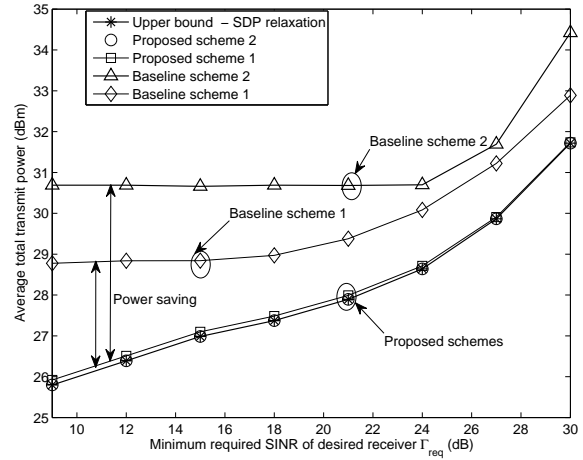}\vspace*{-0.1cm}
\caption{Average total transmit power (dBm) versus the minimum required SINR, $\Gamma_{req}$, of the desired receiver for different resource allocation schemes.
The double-sided arrows indicate the power savings achieved by the proposed schemes compared to the baseline schemes.} \label{fig:p_SNR}\vspace*{-0.5cm}
\end{figure}

\subsection{Average Total Transmit Power and Secrecy Capacity }
Figure \ref{fig:p_SNR} depicts the  average total transmit power versus the
minimum required SINR of the desired receiver, $\Gamma_{req}$, for $K=4$ receivers and different resource allocation schemes.
It can be observed that the average total transmit power of the proposed schemes is a monotonically non-decreasing function of $\Gamma_{req}$. This is attributed to the fact that a higher transmit power is required for satisfying constraint C1 when the requirement of $\Gamma_{req}$ becomes more stringent. Besides, the two proposed suboptimal schemes perform closely to the upper bound system performance achieved by SDP relaxation. In particular, as expected, proposed scheme 1 is less power efficient than  proposed scheme 2 and the upper bound performance. This is because proposed scheme 1 focuses on a smaller feasible solution set and thus obtains a lower bound solution for the original problem formulation (\ref{eqn:cross-layer}). On the other hand, proposed scheme 2 exploits the possibility of achieving the global optimal solution via SDP relaxation and the  lower bound solution. As a result,  it has a superior performance compared to proposed scheme 1.

For comparison, Figure
\ref{fig:p_SNR} also contains the average total transmit power of two baseline
power allocation schemes. For baseline scheme 1, we inject the artificial noise into the null space of the desired receiver such that the artificial noise does not interfere with the desired receiver. Then,  we minimize\footnote{The performances of the two baseline schemes are obtained by solving the corresponding optimization problems via SDP relaxation. Although the solutions may not be rank one,  they serve as performance upper bounds for the two baseline schemes. } the total transmit power by optimizing  $\mathbf{W},\mathbf{V}$, and $\rho$. Baseline scheme 2 has the same structure as baseline scheme 1 excepts that the power splitting ratio is fixed to $\rho=0.5$.
 \begin{figure}[t]
\centering\vspace*{-0.5cm}
\includegraphics[width=3.5 in]{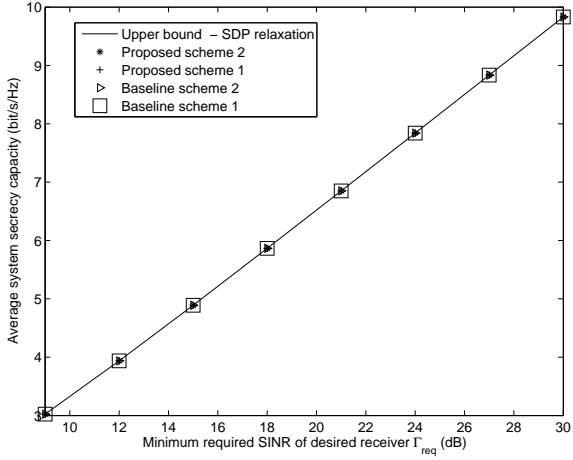}\vspace*{-0.2cm}
 \caption{Average system secrecy capacity (bit/s/Hz) versus the minimum required SINR $\Gamma_{req}$ of the desired receiver for different resource allocation schemes.} \label{fig:cap_SNR}\vspace*{-0.5cm}
\end{figure}
 It can be observed
that the two baseline schemes transmit with higher power than the two proposed suboptimal schemes. Indeed, the proposed suboptimal schemes fully utilize the CSI of all communication links and  optimize the space spanned by artificial noise for performing power allocation. On the contrary, the baseline schemes inject the artificial noise into the null space of the desired receiver. Although the artificial noise does not create any interference to the desired receiver, it is less effective in jamming the potential eavesdroppers. Thus, compared to the proposed schemes, the baseline schemes transmit with higher power on average to satisfy the maximum tolerable SINR constraints of the potential eavesdroppers in C3.

On  the other hand, Figure \ref{fig:cap_SNR} illustrates the average system secrecy capacity versus the minimum required SINR of the desired receiver, $\Gamma_{req}$, for $K=4$ receivers and different resource allocation schemes.  It can be seen that the average system secrecy capacity, i.e., $C_{sec}\ge \log_2(1+\Gamma_{req})-\log_2(1+\underset{k}{\max}\{\Gamma_{{tol_k}}\})$, increases with  $\Gamma_{req}$
since the maximum tolerable SINRs of the idle receivers are constrained to be $\Gamma_{tol_k}=-10$ dB. Besides, all  considered schemes achieve the same value of secrecy capacity. However, the proposed schemes consume much less power than the baseline schemes to achieve the same secrecy capacity, cf. Figure \ref{fig:p_SNR}.

\subsection{Average Total Harvested Power}
Figure \ref{fig:harvested_PT} shows the average total harvested power versus the
minimum required SINR of the desired receiver, $\Gamma_{req}$, for $K=4$ receivers and different resource allocation schemes. It can be seen that the total average harvested power increases with $\Gamma_{req}$. On the one hand, the transmitter has to allocate more power to the information bearing signal to satisfy a larger $\Gamma_{req}$. On the other hand, the power of the artificial noise vector may also increase to reduce the received SINRs of the potential eavesdroppers. As a result, more power is available in the RF and can be harvested by the receivers. Besides, it can be observed that the differences between different schemes in the total energy harvested by the receivers  become smaller when $\Gamma_{req}\gg 1$. Yet, the transmitter of the proposed schemes radiates less power compared to the baseline schemes, cf. Figure \ref{fig:p_SNR},
 due to the proposed optimization.

\begin{figure}[t]
 \centering\vspace*{-0.5cm}
\includegraphics[width=3.5 in]{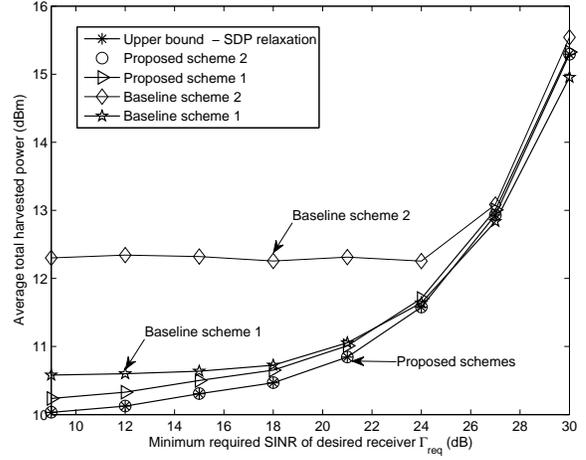}\vspace*{-0.2cm}
\caption{Average total harvested power (dBm) versus
minimum required SINR of the desired receiver, $\Gamma_{req} $ (dB), for different resource allocation schemes. } \label{fig:harvested_PT}\vspace*{-0.4cm}
\end{figure}
 \begin{figure}[t]
\centering
\includegraphics[width=3.5 in]{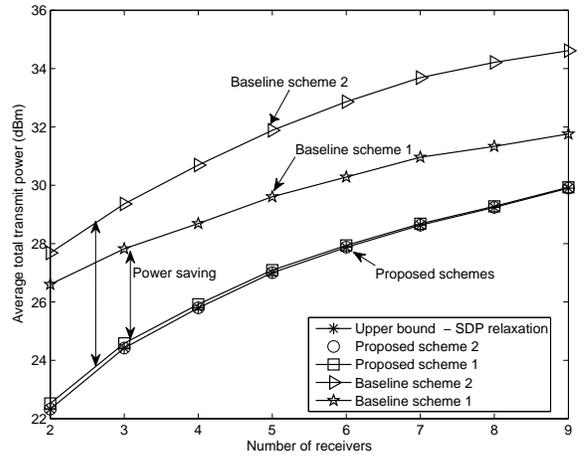}\vspace*{-0.2cm}
 \caption{Average total transmit power (dBm) versus
total number of receivers, $K$, for different resource allocation schemes.} \label{fig:pt_users}\vspace*{-0.5cm}
\end{figure}

\subsection{Average Total Transmit Power versus Number of Receivers}

Figure \ref{fig:pt_users} shows the average total transmit power versus the
number of receivers for different resource allocation schemes. The minimum required SINR of the desired receiver is set to $\Gamma_{req}=9$ dB. It is expected that the total transmit power increases with the number of receivers. The
reason behind this is twofold. First, as the number of receivers in the system increases, there are more idle receivers requiring power transfer from the transmitter even though some of them are experiencing bad channel conditions. Second, there are more potential eavesdroppers present in the system and thus the transmitter has to generate a higher amount of artificial noise to guarantee secrecy.
On the other hand,   the proposed schemes  provide  substantial power savings compared to the  two baseline schemes. We note that, although the corresponding figure is not shown in the paper, the proposed schemes are able to guarantee secure communication even if the number of potential eavesdroppers is larger than the number of transmit antennas $N_t$.

\section{Conclusions}\label{sect:conclusion}
In this paper,  we formulated the resource allocation
algorithm design for secure MISO communication  systems with RF energy harvesting receivers as a non-convex optimization problem.  The problem formulation took into account  secure communication and  power transfer to receivers  via artificial noise injection.   An efficient SDP based resource allocation algorithm was proposed to obtain an upper bound solution for minimization of the total transmit power. The upper bound solution was exploited for the design of two suboptimal but practical resource allocation schemes. Simulation results showed
the excellent performance of the two proposed suboptimal schemes. Furthermore, our results also unveiled the power savings enabled by the optimization of artificial noise generation and the dual use of artificial noise for facilitating   simultaneously security and efficient energy harvesting.

  Studying the impact of imperfect CSI in  secure communication systems with energy transfer is any interesting topic for future work.

\section*{Appendix- Proof of Proposition 1}
It can be shown that the relaxed version of problem (\ref{eqn:SDP}) is convex with respect to the optimization variables and satisfies Slater's constraint qualification. As a result, the KKT conditions are necessary and sufficient conditions \cite{book:convex} for the solution of the relaxed problem. In the following, we focus on the KKT conditions related to $\mathbf{W}$:
\begin{eqnarray}\label{eqn:KKT}
\mathbf{Y}^*\hspace*{-3mm}&\succeq&\hspace*{-3mm} \mathbf{0},\quad\beta_k^*,\,\delta_k^*,\,\lambda^*,\,\psi^*,\,\mu^*\,\ge 0,\\
 \mathbf{Y^*W^*}\hspace*{-3mm}&=&\hspace*{-3mm}\mathbf{0}, \label{eqn:KKT-complementarity}\\ \notag
\label{eqn:KKT_Y2}
\mathbf{Y^*}\hspace*{-3mm}&=&\hspace*{-3mm}\mathbf{I}_{N_t}(1+\psi^*)\hspace*{-0.5mm} +\hspace*{-0.5mm}  \sum_{k=1}^{K-1} \mathbf{g}_k\mathbf{g}^H_k (\beta_k^*-\delta_k^*)     -(\lambda^*+\mu^*)\mathbf{h}\mathbf{h}^H\\
&=&\hspace*{-3mm}\mathbf{A} -(\lambda^*+\mu^*)\mathbf{h}\mathbf{h}^H,
\end{eqnarray}
where $\mathbf{A}=\mathbf{I}_{N_t}(1+\psi^*) + \sum_{k=1}^{K-1} \mathbf{g}_k\mathbf{g}^H_k (\beta_k^*-\delta_k^*)$ and $\mathbf{Y}^*,\,\beta_k^*,\,\delta_k^*,\,\lambda^*,\,\psi^*,\,\mu^*$ are the optimal Lagrange multipliers for (\ref{eqn:dual}). Equation (\ref{eqn:KKT-complementarity}) is the complementary slackness condition and is satisfied when the columns of $\mathbf{W}^*$ lay in the null space of $\mathbf{Y}^*$. Therefore, if  $\Rank(\mathbf{Y}^*)=N_t-1$, then the optimal $\mathbf{W}^*\ne 0$ must be a rank one matrix and the optimal $\mathbf{w}^*$ can be obtained by performing eigenvalue decomposition on $\mathbf{W}^*$.

Now, we prove by contradiction that  $\mathbf{A}$ is a full rank matrix with rank $N_t$ whenever $\beta_k^*\ge\delta_k^*$. Suppose $\mathbf{A}$ is a rank deficient matrix with at least one zero eigenvalue and we denote the associated eigenvector as $\mathbf{u}$. Without loss of generality, we create a matrix  $\mathbf{U}=\mathbf{u}^H\mathbf{u}$ from the eigenvector.  By multiplying both sides of (\ref{eqn:KKT_Y2}) with $\mathbf{U}$ and applying the trace operator, we obtain
\begin{eqnarray}\label{eqn:trace_proof}
\Tr(\mathbf{Y^*}\mathbf{U})&=&\Tr(\mathbf{A}\mathbf{U})   -(\lambda^*+\mu^*)\Tr(\mathbf{h}\mathbf{h}^H\mathbf{U})\notag\\
&\stackrel{(b)}{=}&-(\lambda^*+\mu^*)\Tr(\mathbf{h}\mathbf{h}^H\mathbf{U})
\end{eqnarray}
where ($b$) is due to the fact that $\mathbf{u}$ is generated from the null space of   $\mathbf{A}$ and $\Tr(\mathbf{A}\mathbf{U})=\Tr(\mathbf{u}^H\mathbf{A}\mathbf{u})=0$.
Then, we examine the signs of both sides of the equality in (\ref{eqn:trace_proof}). Consider the right hand side of (\ref{eqn:trace_proof}). We first show $(\lambda^*+\mu^*)>0$. The KKT condition of the optimal power splitting ratio $\rho^*$ leads to
\begin{eqnarray}
\rho^*=\frac{\sqrt{\lambda^*\sigma_s^2 \Gamma_{req}}}{\sqrt{\lambda^*\sigma_s^2 \Gamma_{req}}+\sqrt{\frac{\mu^* P_{\min}}{\eta}}},
\end{eqnarray}
where constraint $\mbox{C7: } 0\le \rho^*\le 1$ is  automatically satisfied. In other words, to achieve a positive minimum required SINR of the desired receiver in C1, $\rho^*>0$ is required which implies $\lambda^*>0$. Thus, $(\lambda^*+\mu^*)>0$ since $\mu^*\ge0$. Then, we prove $\Tr(\mathbf{h}\mathbf{h}^H\mathbf{U})>0$. We note that $\mathbf{g}_k,\forall k,$ belongs to the subspace of $\mathbf{A}$ when $\beta_k^*\ge\delta_k^*$. Also,  $\mathbf{h}$ and $\mathbf{g}_k$ are statistically independent. As a consequence, the probability that $\mathbf{h}$ and $\mathbf{g}_k,\forall k,$ share the same null space is zero which yields $\Tr(\mathbf{h}\mathbf{h}^H\mathbf{U})=\mathbf{u}^H\mathbf{h}\mathbf{h}^H\mathbf{u}\ne 0$. Besides, $\mathbf{h}\mathbf{h}^H$ is a positive semidefinite matrix. Thus, $\Tr(\mathbf{h}\mathbf{h}^H\mathbf{U})$ must be positive and the right hand side of (\ref{eqn:trace_proof}) must be negative.

Next, we focus on the left hand side of (\ref{eqn:trace_proof}).  $\mathbf{Y}^*$ is a positive semidefinite matrix and the left hand side of (\ref{eqn:trace_proof}) is non-negative, i.e.,  $\Tr(\mathbf{Y^*U^*})\ge 0$,  which contradicts the sign of the right hand side of (\ref{eqn:trace_proof}). Therefore, matrix $\mathbf{A}$ must be a full rank matrix with rank $N_t$. From (\ref{eqn:KKT_Y2}), we have
\begin{eqnarray}\notag
&&\Rank(\mathbf{Y}^*)+\Rank((\lambda^*+\mu^*)\mathbf{h}\mathbf{h}^H)\\
 \notag&\ge &\Rank(\mathbf{Y}+(\lambda^*+\mu^*)\mathbf{h}\mathbf{h}^H)\\
&= &\Rank(\mathbf{A})=N_t
\Rightarrow \Rank(\mathbf{Y}^*)\ge N_t-1.
\end{eqnarray}
In other words, $\Rank(\mathbf{Y}^*)$ is either $N_t$ or $N_t-1$. Furthermore, $\mathbf{W}^*\ne\mathbf{0}$ is required to satisfy the minimum SINR requirement of the desired receiver  in C1 when $\Gamma_{req}>0$. As a result, $\Rank(\mathbf{Y}^*)=N_t-1$ and $\Rank(\mathbf{W}^*)=1$.

\bibliographystyle{IEEEtran}
\bibliography{OFDMA-AF}

\end{document}